\newtheorem{assumption}{Assumption}
\newtheorem{definition}{Definition}
\newtheorem{remark}{Remark}
\newtheorem{lemma}{Lemma}
\newtheorem{theorem}{Theorem}
\DeclareMathOperator{\argmax}{argmax}
\DeclareMathOperator{\eig}{eig}
\DeclareMathOperator{\diag}{diag}
\title{\LARGE \bf State and Parameter Estimation for Affine Nonlinear Systems}
\author{Tochukwu Elijah Ogri$^{1}$ \and Zachary I. Bell$^{2}$ \and Rushikesh Kamalapurkar$^{1}$
\thanks{This research was supported in part by the Office of Naval Research under award number N00014-21-1-2481 and the Air Force Research Laboratories under contract number FA8651-19-2-0009. Any opinions, findings, or recommendations in this article are those of the author(s), and do not necessarily reflect the views of the sponsoring agencies.}%
\thanks{$^{1}$ School of Mechanical and Aerospace Engineering, Oklahoma State University, email: {\tt\footnotesize \{tochukwu.ogri, rushikesh.kamalapurkar\} @okstate.edu}.}%
\thanks{$^{2}$ Air Force Research Laboratories, Florida, USA, email: {
\tt \footnotesize zachary.bell.10@us.af.mil.}}}
\begin{document}
\maketitle
\thispagestyle{empty}
\pagestyle{empty}

\begin{abstract} 
This paper proposes a new approach to online state and parameter estimation for nonlinear systems that address limitations in traditional adaptive control methods. Conventional methods apply to a narrow class of  nonlinear systems and rely on stringent excitation conditions, which can be challenging to obtain in practice. In contrast, the proposed approach uses multiplier matrices and a data-driven concurrent learning method to develop an adaptive observer for affine nonlinear systems. Through Lyapunov-based analysis, the technique is proven to guarantee locally uniformly ultimately bounded stable state estimates and ultimately bounded parameter estimation errors. Additionally, under certain excitation conditions, the parameter estimation error is guaranteed to converge to a neighborhood of the origin. Simulation results confirm the effectiveness of the approach, even under mild excitation conditions, highlighting its potential applicability in real-world systems.
\end{abstract}

\section{Introduction}	
\label{section:introduction}
In many real-world systems, the limited availability of sensor information and unknown model parameters make effective control of the system difficult if not impossible. While adaptive control methods typically rely on measurement of the entire state to generate parameter estimates, techniques that can simultaneously estimate the system state are also available for specific classes of dynamical systems \cite{SCC.Kamalapurkar2017a, SCC.Kamalapurkar2017,SCC.katiyar.Bhasin.ea2022}.

Extended Luenberger nonlinear observer techniques have been extensively researched in the field of control theory for state estimation in nonlinear systems \cite{SCC.Accikmecse.Corless.ea2008, SCC.Wang.Bevly.ea2014,SCC.Karami.Farhad.ea2021,SCC.accikmecse.ea2011}. These techniques have been applied to a wide range of systems, including robotics, chemical processes, and power systems.
The nonlinear observer design techniques in \cite{SCC.Arcak.Kokotovic,SCC.Rajamani.Zemouche.ea2020,SCC.Karami.Farhad.ea2021, SCC.Wang.Bevly.ea2014,SCC.accikmecse.ea2011,Ogri.Kamalapurkar.easubmitted} utilize incremental multiplier matrices to characterize the nonlinearities in the system dynamics. The observer gain matrices are then obtained by solving linear matrix inequalities \cite{SCC.Wang.Bevly.ea2014, SCC.accikmecse.ea2011,Ogri.Kamalapurkar.easubmitted} using semi-definite programming. The convergence properties of these state observers can be leveraged to generate precise state estimates for parameter estimation, even when only partial state measurements are available from the output.

Parameter estimation methods that rely on persistent excitation (PE) \cite{SCC.Ioannou.Sun1996,SCC.Anderson1977,SCC.Green.Moore1986} and finite excitation \cite{SCC.Chowdhary.Johnson2011,SCC.Chowdhary.Yucelen.ea2013,SCC.Kersting.Buss2014,SCC.Chowdhary.Muehlegg.ea2013} have also been studied extensively in the literature for systems where all state variables can be measured.

Recent research efforts have focused on the development of adaptive observers that can simultaneously estimate the state and parameters of nonlinear systems \cite{SCC.Creveling.Gill.ea2008, SCC.Togneri.Deng2003, SCC.Chong.Nesic.ea2014, SCC.Kamalapurkar2017a, SCC.Kamalapurkar2017, SCC.katiyar.Bhasin.ea2022, SCC.Pyrkin.Anton.ea2019, arXivSCC.Liu.Buss.ea2022}. However, most of these techniques are restricted to a specific class of nonlinear systems and rely on assumptions that may be difficult to satisfy in practice, such as stringent PE conditions\cite{SCC.Kamalapurkar2017a, SCC.Kamalapurkar2017, arXivSCC.Liu.Buss.ea2022, SCC.katiyar.Bhasin.ea2022, SCC.Pyrkin.Anton.ea2019}, or employ assumptions detailed below which reduce their applicability to most real-world systems.

The design of extended Luenberger nonlinear observers is challenging due to the need for the computation of bounds on the Jacobian matrices of unknown vector fields that model the system. Methods such as \cite{SCC.Quintana.Bernal.ea2021} and \cite{SCC.Guerra.Bernal.ea2018} offer solutions to challenges in the calculation of such Jacobian bounds and suitable multiplier matrices. However,  due to their separation of measurable and unmeasurable signals and reliance on convex optimization techniques to formulate an explicit matrix polynomial form of the gradient, methods such as \cite{SCC.Quintana.Bernal.ea2021} and \cite{SCC.Guerra.Bernal.ea2018} are difficult to apply for simultaneous state and parameter estimation in nonlinear systems.
  
Methods such as those developed in \cite{SCC.Kamalapurkar2017a} and \cite{SCC.Kamalapurkar2017}, while effective, are restricted to a class of dynamical systems that are of the Brunovsky canonical form. Similarly, adaptive observers that use dynamic regressor extension and mixing (DREM) rely on the existence of a cascade form via a coordinate change for which a linear regression relation exists between the system states and unknown parameters \cite{SCC.Pyrkin.Anton.ea2019}. Unlike the existing simultaneous state and parameter estimation methods described above, which are limited to narrow classes of systems such as systems in Brunovsky form or cascade form, this paper presents a novel method that achieves simultaneous state and parameter estimation for a broader class of affine nonlinear systems. 

This paper develops a joint state and parameter estimation scheme by leveraging the advantages of the multiplier matrix approach for observer design, which has been proven to yield asymptotic convergence of state estimation errors \cite{SCC.Wang.Bevly.ea2014}, and builds upon concurrent learning (CL) frameworks \cite{SCC.Kamalapurkar2017a,SCC.Kamalapurkar2017}, which utilizes recorded data (stored in what is commonly called a history stack) to estimate parameters with high accuracy. In contrast with methods proposed in results such as \cite{SCC.Yang.Liu.Huang.ea2013,SCC.Yang.Liu.ea2014,SCC.Huang.Jiang2015}, the developed method does not require any restrictions on the form and rank of the $C$ measurement matrix or impose observability conditions. 

The rest of the paper is organized as follows: Section ~\ref{section:problemFormulation} contains the problem formulation, Section ~\ref{section:stateEstimator} introduces the  state estimator/observer, Section ~\ref{section:parameterEstimator} presents the parameter estimator design, Section ~\ref{section:stabilityAnalysis} contains stability analysis of the developed method, Section ~\ref{section:simulation} contains simulation results and Section ~\ref{section:conclusion} concludes the paper.



\section{Problem Formulation}	
\label{section:problemFormulation}
This paper considers nonlinear dynamical systems of the form 
\begin{equation}
\dot{x} = Y(x)\theta+f_{0}(x) + g(x)u , \quad y = Cx, \label{eq:dynamics_x}
\end{equation}
where $x \in \mathbb{R}^{n}$ and , $u \in\mathbb{R}^{m}$ denotes the system state and the control input respectively, $\theta \in \Theta \subset \mathbb{R}^{p}$ is a vector of unknown parameters, $C \in \mathbb{R}^{q \times n}$ is the output matrix, and $y \in \mathbb{R}^{q}$ is the measured output. The functions $Y: \mathbb{R}^{n} \rightarrow \mathbb{R}^{n \times p}$, $f_{0}: \mathbb{R}^{n} \rightarrow \mathbb{R}^{n}$, and $g: \mathbb{R}^{n} \rightarrow \mathbb{R}^{n \times m}$, denote the regressor, the drift, and the control effectiveness, respectively. 

The objective is to design a state observer that estimates the state $x$ online using the output $y$ and the input $u$ and develop a parameter estimation scheme, which uses memory from recorded data to provide parameter estimates denoted as $\hat{\theta}$. Consistent with the literature on state and parameter estimation, it is assumed that the control signal $u$ and the system state $x$ are bounded.

In order to facilitate the development and analysis of the method presented in this paper, the following assumption is necessary. 
\begin{assumption}\label{ass:jacobianbounds}
    The functions $Y$, $f_0$ and  $g$ are known, their derivatives exist on the compact set $\mathcal{C}\subset\mathbb{R}^n$ and satisfy the element-wise bounds
    \begin{align}
    (K_{y_{1}})_{j,k}  \leq \left(\frac{\mathrm{d}(Y(x))_{j,i}}{\mathrm{d}(x)_k}\right)\theta_{i} \leq (K_{y_{2}})_{j,k}, \label{eq:yBound}\\
	(K_{f_{1}})_{j,k}  \leq \frac{\mathrm{d}(f_{0}(x))_j}{\mathrm{d}(x)_k} \leq (K_{f_{2}})_{j,k}, \\
	(K_{g_{1}})_{j,k}  \leq \left(\frac{\mathrm{d}(g(x))_{j,l}}{\mathrm{d}(x)_k}\right)u_{l} \leq (K_{g_{2}})_{j,k},
\end{align}
for all $x\in\mathcal{C}$, $u\in \mathcal{U}$, $\theta\in\Theta$, $i=1,\ldots,p$,  $j,k = 1,\ldots,n$,  and $l=1,\ldots,m$, where $(\cdot)_{i}$, $(\cdot)_{j}$, $(\cdot)_{i,k}$, and $(\cdot)_{j,k}$ denote the element of the array $(\cdot)$ at the index indicated by the subscript.
\end{assumption}
\begin{remark}
The conditions stated in Assumption~\ref{ass:jacobianbounds} are commonly required in several observer design schemes (see, e.g., \cite{SCC.Zemouche.Bara.ea2005,SCC.Wang.Bevly.ea2014,SCC.Wang.Yang.ea2017,SCC.Rajamani.Zemouche.ea2020}).
\end{remark}

In the following section, sufficient conditions involving multiplier matrices that characterize the affine system will be presented, along with the design of the state estimator.

\section{State Estimator}\label{section:stateEstimator}
This section presents the development of a state estimator that generates estimates of $x$ by employing an extended Luenberger-like observer. To facilitate the observer design, the nonlinear dynamics described in (\ref{eq:dynamics_x}) is expressed as
\begin{equation}
    \dot{x} =	Ax+ F_{\theta}(x, \theta)+F_{1}(x) + G_{u}(x,u) , \quad y = Cx,\label{aug_dynamics_x}
\end{equation}
where
$A = (K_{y_{1}} + K_{f_{1}}+K_{g_{1}})$, $F_{\theta}(x, \theta) = -K_{y_{1}}x+Y(x)\theta$, $F_{1}(x) = -K_{f_{1}}x+f_{0}(x)$,  and $G_{u}
(x,u) = -K_{g_{1}}x + \sum_{i=1}^{N} g_i(x)(u)_i.$

 Assumption~\ref{ass:jacobianbounds} implies that the derivatives of $F_{\theta}$, $F_{1}$ and $G_{u}$ satisfy the element-wise inequalities
\begin{gather}
	0 \leq \frac{\mathrm{d}(F_{\theta}(x))_j}{\mathrm{d}(x)_k} \leq (K_{y_{2}})_{j,k}-(K_{y_{1}})_{j,k},\label{aug_jac_y} \\ 
	0 \leq \frac{\mathrm{d}(F_{1}(x))_j}{\mathrm{d}(x)_k} \leq (K_{f_{2}})_{j,k}-(K_{f_{1}})_{j,k}, \label{aug_jac_f} \\ 
	0 \leq \frac{\mathrm{d}(G_{u}(x,u))_{j}}{\mathrm{d}(x)_k}\leq (K_{g_{2}})_{j,k}-(K_{g_{1}})_{j,k},\label{aug_jac_g}
\end{gather}
Using the derivative bounds, a state estimator with four correction terms is designed as
\begin{equation}
    	\dot{\hat{x}} = A\hat{x}+F_{\theta}[\hat{x}+l_{1}(y-C\hat{x}),\hat{\theta}] + F_{1}[\hat{x}+l_{2}(y-C\hat{x})]+ G_{u}[\hat{x}+l_{3}(y-C\hat{x}), u]+L(y-C\hat{x}),\label{eq:observerdynamics_x}
\end{equation}
where $\hat{x}  \in \mathbb{R}^{n}$ is the
estimate of $x$, $l_{1} \in \mathbb{R}^{n \times q}$, $l_{2} \in \mathbb{R}^{n \times q}$, $l_{3} \in \mathbb{R}^{n \times q}$ and $L \in \mathbb{R}^{n \times q}$ are observer gains, $l_{1}(y-C\hat{x})$, $l_{2}(y-C\hat{x})$ and $l_{3}(y-C\hat{x})$ are nonlinear injection terms and  $L(y-C\hat{x})$ is a linear correction term. With the state estimation error defined as
\begin{equation}
	\tilde{x}  \coloneqq x -\hat{x}\label{eq:deWstimation_error_x},
\end{equation}
estimation error dynamics is given by
\begin{multline}\label{aug_error}
	\dot{\tilde{x}} = (A-LC)\tilde{x} + F_{\theta}(x, \hat\theta)  +F_{1}(x)+G_{u}(x,u)-F_{\theta}[\hat{x}+l_{1}(y-C\hat{x}),\hat{\theta}]-F_{1}[\hat{x}+l_{2}(y-C\hat{x})]\\-G_{u}[\hat{x}+l_{3}\left(y-C\hat{x}\right),u] + F_{\theta}(x, \tilde{\theta}).
\end{multline}
Let the parameter estimation error be defined as
\begin{equation}\label{eq:param_error}
    \tilde{\theta} \coloneqq \theta - \hat{\theta}.
\end{equation}
To facilitate the design of the state estimator, the following assumption is made about the set $\Theta$ which contains $\theta$.
\begin{assumption}\label{ass:ThetaSet}
     There exist a known constant $\overline{\theta} \in \mathbb{R}$ such that $\|\theta\| \leq \overline{\theta}$.
\end{assumption}
\begin{remark}
    Assumption~\ref{ass:ThetaSet} is used to implement a parameter projection algorithm that maintains apriori boundedness of the parameter estimates $\hat{\theta}(t)$ and ensures it stays within a bounded convex set $\Theta \subset \mathbb{R}^{p} \coloneqq \{\hat{\theta} \mid h(\hat{\theta}) \leq 0\}, \text{where } h(\hat{\theta}) \coloneqq \hat{\theta}^\mathsf{T}\hat{\theta}-\overline{\theta}^{2}$ and $\nabla_{\hat{\theta}}h(\hat{\theta}) \coloneqq 2\hat{\theta}$ (cf. \cite[Example~4.4.2]{SCC.Ioannou.Sun1996}).
\end{remark}
The difference functions between the uncertain system components and their estimates can then be characterized using $\psi_{y}: \mathbb{R}_{\geq 0} \times \mathcal{D} \times \Theta \to \mathbb{R}^{n}$, $\psi_{f}: \mathbb{R}_{\geq 0} \times \mathcal{D} \to \mathbb{R}^{n}$, and $\psi_{g}: \mathbb{R}_{\geq 0} \times \mathcal{D} \to \mathbb{R}^{n}$, defined as
\begin{gather}
    \psi_{y}(t,\tilde{x},\hat{\theta}) \coloneqq  F_{\theta}(x, \hat\theta)-F_{\theta}[\hat{x}+l_{1}(y-C\hat{x}),\hat{\theta}],\\
    \psi_{f}(t,\tilde{x}) \coloneqq   F_{1}(x)-F_{1}[\hat{x}+l_{2}(y-C\hat{x})], \text{ and }\\
    \psi_{g}(t,\tilde{x})  \coloneqq  G_{u}(x,u)-G_{u}[\hat{x}+l_{3}(y-C\hat{x}),u].
\end{gather}
The observer error dynamics in (\ref{aug_error}) can then be expressed as
\begin{equation}\label{aug_error2}
     \dot{\tilde{x}} = \left(A-LC\right)\tilde{x} + \psi_{y}(t,\tilde{x}, \hat{\theta}) + \psi_{f}(t,\tilde{x}) + \psi_{g}(t,\tilde{x}) + F_{\theta}(x, \tilde\theta).
\end{equation}

According to the differential mean value theorem (DMVT) \cite[Theorem 2.1]{SCC.Zemouche.ea2005}, provided Assumption~\ref{ass:jacobianbounds} holds and Assumption~\ref{ass:ThetaSet}, the difference functions $\psi_{y}$, $\psi_{f}$, and $\psi_{g}$ are guaranteed to be bounded as
\begin{equation}\label{eq:y_ineq}
 \bar{K}_{y_{1}}(\mathbb{I}_{n}-l_{1}C)\tilde{x}\leq \psi_{y}(t,\tilde{x}, \hat{\theta})  \leq \bar{K}_{y_{2}}(\mathbb{I}_{n}-l_{1}C)\tilde{x},
\end{equation}
\begin{equation}\label{eq:f_ineq}
 \bar{K}_{f_{1}}(\mathbb{I}_{n}-l_{2}C)\tilde{x}\leq \psi_{f}(t,\tilde{x}) \leq \bar{K}_{f_{2}}(\mathbb{I}_{n}-l_{2}C)\tilde{x},
\end{equation}
and
\begin{equation}\label{eq:g_ineq}
\bar{K}_{g_{1}}(\mathbb{I}_{n}-l_{3}C)\tilde{x}\leq \psi_{g}(t,\tilde{x}) \leq \bar{K}_{g_{2}}(\mathbb{I}_{n}-l_{3}C)\tilde{x}.
\end{equation}
where $\bar{K}_{y_{1}} = 0_{n \times n}$, $\bar{K}_{y_{2}} = K_{y_{2}}-K_{y_{1}}$, $\bar{K}_{f_{1}} = 0_{n \times n}$, $\bar{K}_{f_{2}} = K_{f_{2}}-K_{f_{1}}$, $\bar{K}_{g_{1}} = 0_{n \times n}$ and $\bar{K}_{g_{2}} = K_{g_{2}}-K_{g_{1}}$.

To establish the stability of the state estimation error dynamics, it suffices to rely on the sector information provided by the compact set $\mathcal{C}$, which is defined by the Jacobian bounds presented in (\ref{aug_jac_y}), (\ref{aug_jac_f}), and (\ref{aug_jac_g}) and constraints $\psi_{y}$, $\psi_{f}$, and $\psi_{g}$. Specifically, the inequalities in (\ref{eq:y_ineq}), (\ref{eq:f_ineq}), and (\ref{eq:g_ineq}) can be used to obtain the following bounds
\begin{equation}\label{quad_y}
 \left[ \psi_{y}(t,\tilde{x}, \hat{\theta}) \right]^{\mathsf{T}}[ \psi_{y}(t,\tilde{x}, \hat{\theta}) -\bar{K}_{y_{2}}(\mathbb{I}_{n}-l_{1}C)\tilde{x}]\leq 0,
\end{equation}
\begin{equation}\label{quad_f}
    \left[ \psi_{f}(t,\tilde{x})\right]^{\mathsf{T}}[ \psi_{f}(t,\tilde{x})-\bar{K}_{f_{2}}(\mathbb{I}_{n}-l_{2}C)\tilde{x}]\leq 0,
\end{equation}
and
\begin{equation}\label{quad_g}
 \left[ \psi_{g}(t,\tilde{x})\right]^{\mathsf{T}}[ \psi_{g}(t,\tilde{x})-\bar{K}_{g_{2}}(\mathbb{I}_{n}-l_{3}C)\tilde{x}]\leq 0.
\end{equation}
Converting the inequalities in (\ref{quad_y}), (\ref{quad_f}), and (\ref{quad_g}) into their quadratic forms yields
\begin{equation}\label{matrixCondB}
\begin{bmatrix} 
\tilde{x} \\ \psi_{y}
\end{bmatrix}^{\mathsf{T}}\begin{bmatrix}
\mathbb{I}_{n}-l_{1}C & 0\\ 0 &\mathbb{I}_{n}
\end{bmatrix}^{\mathsf{T}}J_{y}\begin{bmatrix}
\mathbb{I}_{n}-l_{1}C & 0\\ 0 &\mathbb{I}_{n}
\end{bmatrix}\begin{bmatrix}
\tilde{x} \\ \psi_{y}
\end{bmatrix} \leq 0, 
\end{equation}

\begin{equation}\label{matrixCondH}
\begin{bmatrix} 
\tilde{x} \\\psi_{f}
\end{bmatrix}^{\mathsf{T}}\begin{bmatrix}
\mathbb{I}_{n}-l_{2}C & 0\\ 0 &\mathbb{I}_{n}
\end{bmatrix}^{\mathsf{T}}J_{f}\begin{bmatrix}
\mathbb{I}_{n}-l_{2}C & 0\\ 0 &\mathbb{I}_{n}
\end{bmatrix}\begin{bmatrix}
\tilde{x} \\\psi_{f}
\end{bmatrix} \leq 0, 
\end{equation}
and
\begin{equation}\label{matrixCondK}
\begin{bmatrix}
\tilde{x} \\ \psi_{g}
\end{bmatrix}^{\mathsf{T}}\begin{bmatrix}
\mathbb{I}_{n}-l_{3}C & 0\\ 0 &\mathbb{I}_{n}
\end{bmatrix}^{\mathsf{T}}J_{g}\begin{bmatrix}
\mathbb{I}_{n}-l_{3}C & 0\\ 0 &\mathbb{I}_{n}
\end{bmatrix}\begin{bmatrix}
\tilde{x} \\ \psi_{g}
\end{bmatrix} \leq 0, 
\end{equation}
with their corresponding multiplier matrices designed as
\begin{equation}\label{J_y}
J_{y} = \begin{bmatrix}
0 & -\frac{K_{y_{2}}^{\mathsf{T}}-K_{y_{1}}^{\mathsf{T}}}{2}\\ -\frac{K_{y_{2}}-K_{y_{1}}}{2} &\mathbb{I}_{n}
\end{bmatrix},
\end{equation}

\begin{equation}\label{J_f}
J_{f} = \begin{bmatrix}
0 & -\frac{K_{f_{2}}^{\mathsf{T}}-K_{f_{1}}^{\mathsf{T}}}{2}\\ -\frac{K_{f_{2}}-K_{f_{1}}}{2} &\mathbb{I}_{n}
\end{bmatrix},
\end{equation}
and
\begin{equation}\label{J_g}
J_{g} = \begin{bmatrix}
0& -\frac{K_{g_2}^{\mathsf{T}}-K_{g_1}^{\mathsf{T}}}{2}\\ -\frac{K_{g_2}-K_{g_1}}{2} &\mathbb{I}_{n}
\end{bmatrix}. 
\end{equation}

A Lyapunov-based analysis that uses the above inequalities to establish boundedness of the state estimation error for all $t \in \mathbb{R}_{\geq 0}$ is presented in Section~\ref{section:stabilityAnalysis}.

\section{Parameter Estimator Design}
\label{section:parameterEstimator}

In this section, an output feedback concurrent learning update law is developed to estimate the unknown parameters.
The parameter estimator relies on the fact that the difference between the state estimates at time $t$ and time $t-T$ can be expressed as an affine function of the parameters $\theta$ and a residual that reduces with reducing state estimation errors.

\begin{lemma}\label{lem:ErrorTermformulation}

If $x$, $\hat{x} \in \mathcal{C}$ and if Assumption~\ref{ass:jacobianbounds} holds, then a given time delay $T\geq 0$ and for all $t\geq T$, the state estimates satisfy $\hat{x}(t) -\hat{x}(t-T) = \hat{\mathcal{Y}}(t) \theta+\hat{\mathcal{G}}_{fu}(t) + \mathcal{E}(t)$, where $\hat{\mathcal{Y}}(t)\coloneqq \int_{t-T}^{t} Y(\hat{x}(\tau))d\tau$, $\hat{\mathcal{G}}_{fu}(t) \coloneqq \int_{t-T}^{t} f_{0}(\hat{x}(\tau))+ g(\hat{x}(\tau))u(\tau)d\tau$, and $\mathcal{E}(t) =  O\left(\sup_{\sigma\in[t-T,t]}\|\tilde{x}(\sigma)\|\right)$.

\end{lemma}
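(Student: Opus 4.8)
The plan is to derive the identity by integrating the \emph{true} plant dynamics \eqref{eq:dynamics_x} over the window $[t-T,t]$ and then substituting the estimate $\hat{x}$ for the true state $x$ inside the integrands, pushing every resulting mismatch into $\mathcal{E}(t)$. First I would integrate $\dot{x}=Y(x)\theta+f_{0}(x)+g(x)u$ from $t-T$ to $t$ and use that $\theta$ is constant to obtain
\[
x(t)-x(t-T)=\Bigl(\int_{t-T}^{t}Y(x(\tau))\,d\tau\Bigr)\theta+\int_{t-T}^{t}\bigl(f_{0}(x(\tau))+g(x(\tau))u(\tau)\bigr)\,d\tau .
\]
Then, using $\hat{x}=x-\tilde{x}$ from \eqref{eq:deWstimation_error_x}, I would rewrite the left-hand side as $\hat{x}(t)-\hat{x}(t-T)+\tilde{x}(t)-\tilde{x}(t-T)$ and solve for $\hat{x}(t)-\hat{x}(t-T)$.

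Next I would add and subtract $Y(\hat{x}(\tau))\theta$, $f_{0}(\hat{x}(\tau))$, and $g(\hat{x}(\tau))u(\tau)$ inside the integrals so that the definitions of $\hat{\mathcal{Y}}(t)$ and $\hat{\mathcal{G}}_{fu}(t)$ appear explicitly, and collect the remainder into
\[
\mathcal{E}(t)=\int_{t-T}^{t}\!\Bigl[\bigl(Y(x(\tau))-Y(\hat{x}(\tau))\bigr)\theta+f_{0}(x(\tau))-f_{0}(\hat{x}(\tau))+\bigl(g(x(\tau))-g(\hat{x}(\tau))\bigr)u(\tau)\Bigr]d\tau-\tilde{x}(t)+\tilde{x}(t-T).
\]
To bound $\mathcal{E}(t)$ I would invoke the DMVT \cite[Theorem~2.1]{SCC.Zemouche.ea2005} on $\mathcal{C}$ exactly as was done for \eqref{eq:y_ineq}--\eqref{eq:g_ineq}: Assumption~\ref{ass:jacobianbounds} bounds the Jacobians of $x\mapsto Y(x)\theta$, $x\mapsto f_{0}(x)$, and (for each $u\in\mathcal U$) $x\mapsto g(x)u$ element-wise on $\mathcal C$, which, together with the boundedness of $u$, yields Lipschitz constants $L_{y},L_{f},L_{g}$ with $\|Y(x(\tau))\theta-Y(\hat{x}(\tau))\theta\|\le L_{y}\|\tilde{x}(\tau)\|$ and analogous estimates for the other two terms. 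The triangle inequality then gives $\|\mathcal{E}(t)\|\le\bigl(T(L_{y}+L_{f}+L_{g})+2\bigr)\sup_{\sigma\in[t-T,t]}\|\tilde{x}(\sigma)\|$, i.e.\ $\mathcal{E}(t)=O\bigl(\sup_{\sigma\in[t-T,t]}\|\tilde{x}(\sigma)\|\bigr)$, completing the argument.

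The integration and the triangle-inequality estimate are routine; the step that needs care — and the main obstacle — is converting the \emph{pointwise} element-wise Jacobian bounds of Assumption~\ref{ass:jacobianbounds} into a genuine Lipschitz bound on the difference evaluated along the chord joining $\hat{x}(\tau)$ and $x(\tau)$, which requires that chord to remain in the region where the bounds hold; I would handle this by taking $\mathcal{C}$ convex (or passing to $\operatorname{conv}\mathcal{C}$) and writing, e.g., $Y(x)\theta-Y(\hat{x})\theta=\bigl(\int_{0}^{1}\partial_{x}\bigl(Y(\hat{x}+s\tilde{x})\theta\bigr)\,ds\bigr)\tilde{x}$ before taking norms. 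A minor subtlety worth flagging is that Assumption~\ref{ass:jacobianbounds} controls $\tfrac{\mathrm d (Y(x))_{j,i}}{\mathrm d (x)_k}\theta_{i}$ rather than the raw partials of $Y$, but this is exactly what is needed: the error term in $\mathcal{E}(t)$ is $\bigl(\int (Y(x)-Y(\hat x))\,d\tau\bigr)\theta=\int\bigl(Y(x)\theta-Y(\hat x)\theta\bigr)d\tau$, so the parameter vector is absorbed before the bound is applied, and Assumption~\ref{ass:ThetaSet} ($\|\theta\|\le\overline{\theta}$) is used only to make the constant $L_{y}$ uniform in $\theta$.
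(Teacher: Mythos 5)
Your proposal is correct and follows essentially the same route as the paper's proof: integrate the true dynamics over $[t-T,t]$, rewrite the left-hand side via $\hat{x}=x-\tilde{x}$, add and subtract the integrals evaluated at $\hat{x}$, and bound the residual using the DMVT together with Assumption~\ref{ass:jacobianbounds}. Your write-up is in fact slightly more explicit than the paper's (giving the closed form of $\mathcal{E}(t)$, the resulting constant $T(L_y+L_f+L_g)+2$, and the convexity caveat for the chord argument), but these are refinements of the same argument rather than a different approach.
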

\begin{proof}
    Integrating the dynamics in (\ref{eq:dynamics_x}) yields
    \begin{equation}\label{eq:dyanmicsIntegral}
        x(t)-x(t-T) = \int_{t-T}^{t} Y(x(\tau))\theta + f_{0}(x(\tau))+ g(x(\tau))u(\tau)d\tau
    \end{equation}
    By adding and subtracting $x(t)$ and $x(t-T)$, the difference $\hat{x}(t)-\hat{x}(t-T)$ can be expressed as
    \begin{equation}
        \hat{x}(t)-\hat{x}(t-T) = -\tilde{x}(t)+\tilde{x}(t-T)+x(t)-x(t-T)
    \end{equation}
    Substituting from (\ref{eq:dyanmicsIntegral}), adding and subtracting the integral $ \int_{t-T}^{t} Y(\hat{x}(\tau))\theta+f_{0}(\hat{x}(\tau))+ g(\hat{x}(\tau))u(\tau)d\tau$, and simplifying yields 
    \begin{multline}
        \hat{x}(t)-\hat{x}(t-T) =\int_{t-T}^{t} Y(\hat{x}(\tau))\theta d\tau + \int_{t-T}^{t} f_{0}(\hat{x}(\tau))+ g(\hat{x}(\tau))u(\tau)d\tau -\tilde{x}(t) + \tilde{x}(t-T) + \int_{t-T}^{t} \tilde{Y} (x(\tau),\hat{x}(\tau))\theta d\tau \\+ \int_{t-T}^{t} \tilde{f}_{0}(x(\tau),\hat{x}(\tau))+\tilde{g}(x(\tau),\hat{x}(\tau), u(\tau))d\tau
    \end{multline}
    where $\tilde{Y}(x(\tau),\hat{x}(\tau)) \coloneqq Y(x(\tau))-Y(\hat{x}(\tau))$, $\tilde{f}_{0}(x(\tau),\hat{x}(\tau)) \coloneqq f_{0}(x(\tau))-f_{0}(\hat{x}(\tau))$, and $\tilde{g}(x(\tau),\hat{x}(\tau),u(\tau)) \coloneqq g(x(\tau))u(\tau)-g(\hat{x}(\tau))u(\tau)$. If $x,\hat{x} \in \mathcal{C}$  and Assumption~\ref{ass:jacobianbounds} holds, then the DMVT can be invoked to obtain 
    \begin{equation}
        \hat{x}(t)-\hat{x}(0) = \hat{\mathcal{Y}}(t)\theta +\hat{\mathcal{G}}_{fu}(t) + \mathcal{E}(t)
    \end{equation}
    where the residual term satisfies $\mathcal{E}(t) =  O\left(\sup_{\sigma\in[t-T,t]}\|\tilde{x}(\sigma)\|\right)$.
\end{proof}
{Lemma~\ref{lem:ErrorTermformulation} implies that the parameter estimation error at any time $\tau$ can be expressed as $\hat{\mathcal{Y}}(t) \tilde{\theta}(\tau) = \hat{x}(t) -\hat{x}(t-T) - \hat{\mathcal{G}}_{fu}(t)-\hat{\mathcal{Y}}(t) \hat{\theta}(\tau) - \mathcal{E}(t) $, which motivates the update law 
{\medmuskip=0mu\thinmuskip=0mu\thickmuskip=0mu\begin{equation}\label{eq:parameterUpdate}
 \dot{\hat{\theta}} =  \begin{cases} 
      k_{\theta}\Gamma\phi, & \text{if }\hat{\theta}^\mathsf{T}\hat{\theta} \ < \ \overline{\theta}^{2}\text{or if } \\& \hat{\theta}^\mathsf{T}\hat{\theta}  = \overline{\theta}^{2} \text{and }(k_{\theta}\Gamma\phi)^\mathsf{T}\hat{\theta} \leq 0\\
       \left(\mathbb{I}_{p} - \frac{\Gamma\hat{\theta}\hat{\theta}^\mathsf{T}}{\hat{\theta}^\mathsf{T}\Gamma\hat{\theta}}\right)k_{\theta}\Gamma\phi, & \text{otherwise}
   \end{cases}\end{equation}}
where $\phi \coloneqq \sum_{i=1}^{N} \left(\frac{\hat{\mathcal{Y}}(t_{i})}{1+\kappa\|\hat{\mathcal{Y}}(t_{i})\|^{2}}\right)^{\mathsf{T}}\big(\hat{x}(t_{i})-\hat{x}(t_{i}-T) -\hat{\mathcal{G}}_{fu}(t_{i})-\hat{\mathcal{Y}}(t_{i})\hat{\theta})$, $\kappa \in \mathbb{R}_{>0}$ is the normalization gain, and $k_{\theta} \in \mathbb{R}_{>0}$ is the CL gain. The matrix $\Gamma \in \mathbb{R}^{p \times p}$ is the least-squares gain matrix updated as
{\medmuskip=0mu\thinmuskip=0mu\thickmuskip=0mu\begin{equation}\label{eq:gammaUpdateLaw}
    \dot{\Gamma} = \begin{cases} 
       \beta_{1}\Gamma - k_{\theta}\Gamma\frac{{\hat{\mathcal{Y}}(t_{i})^{\mathsf{T}}\hat{\mathcal{Y}}(t_{i})}}{1+\kappa\|\hat{\mathcal{Y}}(t_{i})\|^{2}} \Gamma, & \text{if }\hat{\theta}^\mathsf{T}\hat{\theta} \ < \ \overline{\theta}^{2} \text{or if }\\& \hat{\theta}^\mathsf{T}\hat{\theta}  = \overline{\theta}^{2} \text{and }(k_{\theta}\Gamma\phi)^\mathsf{T}\hat{\theta} \leq 0\\
       0 & \text{otherwise}
   \end{cases}  
\end{equation}}
where $\beta_{1} \in \mathbb{R}_{>0}$ is a constant adaptation gain. The update law relies on the time delay $T$, and a history stack $\mathcal{H}$. The history stack represents a set of piecewise constant functions that can be expressed as follows:
 \begin{gather}
     \hat{\mathscr{X}} = \begin{bmatrix}
                   \hat{x}(t_{1})-\hat{x}(t_{1}-T)\\
                   \vdots\\
                  \hat{x}(t_{N})-\hat{x}(t_{N}-T)
                    \end{bmatrix} \in \mathbb{R}^{nN},
                    \hat{\mathscr{Y}} = \begin{bmatrix}
                    \hat{\mathcal{Y}}(t_{1})\\
                   \vdots\\
                   \hat{\mathcal{Y}}(t_{N})
                    \end{bmatrix} \\ \in \mathbb{R}^{nN\times p},
                     \hat{\mathscr{G}}_{fu} = \begin{bmatrix}
                    \hat{\mathcal{G}}_{fu}(t_{1})\\
                   \vdots\\
                   \hat{\mathcal{G}}_{fu}(t_{N})
                    \end{bmatrix} \in \mathbb{R}^{nN}
 \end{gather}
The integral terms in the update law can be calculated as
\begin{gather}
    \hat{\mathcal{Y}}(t) = I_{Y}(t)-I_y(t-T), \ \text{and} \\ \mathcal{G}_{fu}(t) = \hat{I}_{fg_{u}}(t)-\hat{I}_{fg_{u}}(t-T),
\end{gather}
where $\hat{I}_{Y}(t) = \int_{0}^{t} Y(\hat{x})$ and $\hat{I}_{fg_{u}}(t) = \int_{0}^{t} f_{0}(\hat{x}) + g(\hat{x})u$, are computed by solving
\begin{equation}\label{regressorUpdate}
     \dot{\hat{I}}_{Y} = Y(\hat{x})
\end{equation}
and
\begin{equation}\label{ifguUpdate}
    \dot{\hat{I}}_{fg_{u}} = f_{0}(\hat{x}) + g(\hat{x})u
\end{equation}
starting from the initial conditions $I_{Y,0} = 0_{n\times p}$ and $I_{fg_{u},0} = 0_{n\times 1}$.

Using Lemma \ref{lem:ErrorTermformulation}, the parameter estimation error dynamics can be expressed as 
\begin{equation}\label{eq:parameterErrorDyn}
    \dot{\tilde{\theta}} = -k_{\theta}\Gamma \sum_{i=1}^{N}\frac{{\hat{\mathcal{Y}}(t_{i})^{\mathsf{T}}\hat{\mathcal{Y}}(t_{i})}}{1+\kappa\|\hat{\mathcal{Y}}(t_{i})\|^{2}}\tilde{\theta}-k_{\theta}\Gamma \sum_{i=1}^{N}\frac{{\hat{\mathcal{Y}}(t_{i})^\mathsf{T}}\mathcal{E}(t_{i})}{1+\kappa\|\hat{\mathcal{Y}}(t_{i})\|^{2}}.
\end{equation}
It is clear from \eqref{eq:parameterErrorDyn} that for the parameter estimation error to be bounded, the matrix $\sum_{i=1}^{N}\frac{{\hat{\mathcal{Y}}(t_{i})^{\mathsf{T}}\hat{\mathcal{Y}}(t_{i})}}{1+\kappa\|\hat{\mathcal{Y}}(t_{i})\|^{2}}$ needs to be positive definite. Such a positive definite matrix can be obtained if the trajectories are sufficiently informative and the data stored in the history stack $\mathcal{H}$ are recorded carefully. This requirement is formalized by the following assumption.
\begin{assumption}\label{ass:regressorRank}
    For a given $M \in \mathbb{N}$, there exist a set of time instances $\{t_{i}\}_{i=1}^{N}$ such that
    \begin{equation*}
        \lambda_{\min}\left(\sum_{i=1}^{N} \frac{{\hat{\mathcal{Y}}(t_{i})^{\mathsf{T}}\hat{\mathcal{Y}}(t_{i})}}{1+\kappa\|\hat{\mathcal{Y}}(t_{i})\|^{2}}\right) = \underline{c} > 0.
    \end{equation*}
\end{assumption}
In the following, a history stack that meets the eigenvalue condition in Assumption \ref{ass:regressorRank} is called \emph{full rank}.

Since the rate of convergence of the parameter estimation errors depends on the lower bound $\underline{c}$ on the minimum eigenvalue, a minimum eigenvalue maximization algorithm is utilized for the selection of the time instances $\{t_{i}\}_{i=1}^{N}$ where history stack $\mathcal{H}$ is updated (see, for example, \cite{SCC.Kamalapurkar2017}). The algorithm presented in Algorithm~\ref{algo:parameterEstimatorAlgo} replaces an existing data point $\left(\hat{x}_{i} - \hat{x}_{i-T}, \hat{\mathcal{Y}}_{i}, \hat{\mathcal{G}}_{fui}\right)$, with a new data point $\left(\hat{x}^{*} - \hat{x}^{*-}, \hat{\mathcal{Y}}^{*}, {\hat{\mathcal{G}}_{fu}}^{*}\right)$, for some $i \in {1, \hdots , N}$, where $\hat{x}^{*} - \hat{x}^{*-} \coloneqq \hat{x}(t)-\hat{x}(t-T)$, $\hat{\mathcal{Y}}^{*} \coloneqq \hat{\mathcal{Y}}(t)$, $\hat{\mathcal{G}}_{fu}^{*} \coloneqq \hat{\mathscr{G}}_{fu}(t)$, and $\hat{\mathcal{Y}}^{*} \coloneqq \hat{\mathcal{Y}}(t)$, only if the following condition holds:
\begin{equation}\label{eq:maximCond}
\lambda{\min}\left(\sum_{i \neq j} \sigma_{i}\hat{\mathcal{Y}}_{i}^\mathsf{T}\hat{\mathcal{Y}}_{i} + \sigma_{j}\hat{\mathcal{Y}}_{j}^\mathsf{T}\hat{\mathcal{Y}}_{j}\right) <  \frac{\lambda_{\min}\left(\sum_{i \neq j} \sigma_{i}\hat{\mathcal{Y}}_{i}^\mathsf{T}\hat{\mathcal{Y}}_{i} + \sigma^{*}\hat{\mathcal{Y}}^\mathsf{T}\hat{\mathcal{Y}}^{*}\right)}{\left(1 + \delta\right)}
\end{equation}
Here, $\lambda_{\min}(\cdot)$ denotes the minimum eigenvalue of a matrix, $\delta$ is a constant that can be adjusted, $\sigma_{i} \coloneqq \frac{1}{1+\kappa\|\hat{\mathcal{Y}}_{i}\|^{2}}$, $\sigma_{j} \coloneqq \frac{1}{1+\kappa\|\hat{\mathcal{Y}}_{j}\|^{2}}$, and $\sigma^{*} \coloneqq \frac{1}{1+\kappa\|\hat{\mathcal{Y}}^{*}\|^{2}}$.

The availability of accurate state estimates is required for precise parameter estimation. However, the initial history stack, recorded during transients, may contain inaccurate data, requiring a purge of the history stack once more accurate state estimates become available. In such cases, newer state estimates are preferred, subject to the conditions of Theorem~\ref{thm:stateErrorConvergence}. To ensure estimator stability while utilizing newer data, a greedy purging algorithm based on dwell time is employed. This algorithm uses two history stacks: a main stack denoted as $\mathcal{H}$ and a transient stack labeled $\mathcal{G}$. The transient stack is filled until a sufficient dwell time $\mathcal{T}$ has elapsed. Then, the main stack is purged, and the transient stack is copied into the main stack. This approach enables the use of newer, more accurate data while maintaining estimator stability.
\begin{algorithm}
     \caption{Algorithm for Event-based implementation of Concurrent learning Adaptive History Stack Observer. At each time instance $t$, $\tau_{1}$ stores the last time instance an event occurred, $\tau_{2}$ stores the last time instance $\mathcal{H}$ was purged, $\lambda$ stores the highest minimum eigenvalue encountered so far, $\mathcal{T}$ denotes the dwell time, $\lambda^{*}$ denotes some user selected eigenvalue threshold, $t^{*}$ denotes some user selected sampling rate and $\xi \in (0,1]$ is a threshold for purging.
 }
\begin{algorithmic}[1] 
 \Require $t_{f} \ \in \ \mathbb{R}_{\geq t_{0}}$, $t^{*} > 0$, $T \in \mathbb{R}_{\geq 0}$, $\lambda^{*} \geq 0$
\State $\hat{\mathscr{X}} \gets 0$,  $\hat{\mathscr{Y}} \gets 0$, $\hat{\mathscr{G}}_{fu} \gets 0$, $\tau_{1} \gets 0$, $\tau_{2} \gets 0$ \Comment{Global variables}
\State $\lambda \gets \min(\eig(\hat{\mathscr{Y}}^{\mathsf{T}}\hat{\mathscr{Y}}))$, $t_{0} \gets 0$, $\hat{x}_{0} \gets \hat{x}(t_{0})$, $\hat{\theta}_{0} = \hat{\theta}(t_{0})$
\While{$t_{0} < t_f$}
    \State integrate the DDEs in (\ref{eq:parameterUpdate}), (\ref{eq:gammaUpdateLaw}), (\ref{regressorUpdate}), and  (\ref{ifguUpdate}) over the interval, $[t_{0}, t_{f}]$
    \If{$(t-\tau_{1}) \geq t^{*}$}
        \If{$t \geq T$}
        \State stop integration, an event has occurred
        \State $j \gets \argmax_{i=1:N}\{\min\{\eig\left(\hat{\mathscr{Y}}^{\mathsf{T}}\hat{\mathscr{Y}}-\hat{\mathcal{Y}}_{i}^{\mathsf{T}}\hat{\mathcal{Y}}_{i}+\hat{\mathcal{Y}}^{\mathsf{T}}\hat{\mathcal{Y}}\right)\}\}$
            \If{
                 $\max_{i=1:N}\{\min\{\eig\left(\hat{\mathscr{Y}}^{\mathsf{T}}\hat{\mathscr{Y}}-\hat{\mathcal{Y}}_{i}^{\mathsf{T}}\hat{\mathcal{Y}}_{i}+\hat{\mathcal{Y}}^{\mathsf{T}}\hat{\mathcal{Y}}\right)\}\}-\lambda \geq \lambda^{*}$}
                 \State 
                 $\lambda \gets \max_{i=1:N}\{\min\{\eig\left(\hat{\mathscr{Y}}^{\mathsf{T}}\hat{\mathscr{Y}}-\hat{\mathcal{Y}}_{i}^{\mathsf{T}}\hat{\mathcal{Y}}_{i}+\hat{\mathcal{Y}}^{\mathsf{T}}\hat{\mathcal{Y}}\right)\}\}$
                 \State $\{\hat{\mathscr{Y}}_{i}\}_{i=(j-1)}^{nj} \gets \hat{\mathcal{Y}}(t)$
                 \State $\{\hat{\mathscr{G}}_{fui}\}_{i=(j-1)}^{nj} \gets \hat{\mathcal{G}}_{fu}(t)$
                 \State $\{\hat{\mathscr{X}}_i\}_{i=(j-1)}^{nj}\gets \hat{x}(t)-\hat{x}(t-T)$
                 \If{
                 $\mathcal{G}$ is not full}
                  \State add the data points to $\mathcal{G}$
                 \Else
                  \State add the data points to $\mathcal{G}$ if (\ref{eq:maximCond}) holds
                 \EndIf
                 \If{$\min(\eig(\hat{\mathscr{Y}}^{\mathsf{T}}\hat{\mathscr{Y}})) \geq \xi \lambda$}
                     \If {$(t-\tau_{2}) \geq\mathcal{T}(t)$}
                      \State $\mathcal{H} \gets \mathcal{G}$, $\mathcal{G} \gets 0$, and $\tau_{2} \gets t$
                      \If{$\lambda < \min(\eig(\hat{\mathscr{Y}}^{\mathsf{T}}\hat{\mathscr{Y}}))$}
                      \State $\lambda \gets \min(\eig(\hat{\mathscr{Y}}^{\mathsf{T}}\hat{\mathscr{Y}}))$
                      \EndIf
                     \EndIf
                     \State $t_{0} \gets t$, $x_{0} \gets x(t)$, $\hat{\theta}_{0} \gets \hat{\theta}(t)$
                     \State $I_{Y,0} \gets I_{Y}(t)$, $\hat{I}_{fg_{u,0}} \gets \hat{I}_{fg_{u}}(t)$
                \EndIf
            \EndIf
            \Else
            \State no event, keep on integrating the DDEs
        \EndIf
       \State $\tau_{1} \gets t$  \Comment{Set this even if a new event is not detected}
    \EndIf
     \State no event, keep on integrating the DDEs
\EndWhile
 \end{algorithmic} \label{algo:parameterEstimatorAlgo}
 \end{algorithm}

\section{Stability Analysis}\label{section:stabilityAnalysis}
In this section, stability analysis of the joint state and parameter estimation architecture will be carried out using Lyapunov methods. The following Theorem establishes local uniformly ultimately boundedness of the state estimation errors.
\begin{theorem}\label{thm:stateErrorConvergence}
    Provided Assumption \ref{ass:jacobianbounds} holds, there exists a constant symmetric positive definite matrix, $P$, and four observer gains, $l_{1}$, $l_{2}$, $l_{3}$ and $L$, that satisfy the matrix inequality,
   \begin{equation}\label{lmi}
\begin{bmatrix} 
    (A-LC)^{\mathsf{T}}P+P(A-LC) + 2 \alpha P &P-{J_{21}}^{\mathsf{T}} \\P-J_{21}  & -J_{22}
    \end{bmatrix} < 0, \end{equation}
        where $J_{21} \coloneqq (J_{y})_{21}(\mathbb{I}_{n}-l_{1}C) + (J_{f})_{21}(\mathbb{I}_{n}-l_{2}C)+(J_{g})_{21}(\mathbb{I}_{n}-l_{3}C)$ and $J_{22} \coloneqq (J_{y})_{22} + (J_{f})_{22}+(J_{g})_{22}$, then observer error system in (\ref{aug_error}) is locally uniformly ultimately bounded.
\end{theorem}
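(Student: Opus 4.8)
The plan is a Lyapunov analysis built on the sector/multiplier machinery assembled in Section~\ref{section:stateEstimator}. I would take $V(\tilde x) \coloneqq \tilde x^{\mathsf T} P \tilde x$ with $P\succ 0$ the matrix furnished by~\eqref{lmi}, so $\lambda_{\min}(P)\|\tilde x\|^{2} \le V(\tilde x) \le \lambda_{\max}(P)\|\tilde x\|^{2}$. Differentiating along~\eqref{aug_error2} and using $y - C\hat x = C\tilde x$ gives
\begin{align*}
\dot V = \tilde x^{\mathsf T}\big[(A-LC)^{\mathsf T}P + P(A-LC)\big]\tilde x + 2\tilde x^{\mathsf T}P\big(\psi_{y} + \psi_{f} + \psi_{g}\big) + 2\tilde x^{\mathsf T}P\,F_{\theta}(x,\tilde\theta).
\end{align*}
The first two groups of terms form a quadratic form in $\tilde x$ and the injection residuals, while the last term is an exogenous perturbation driven by the parameter error.

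Next I would invoke the S-procedure. Since Assumption~\ref{ass:jacobianbounds} holds and $x,\hat x$ stay in $\mathcal C$, the quadratic sector inequalities~\eqref{matrixCondB}, \eqref{matrixCondH} and~\eqref{matrixCondK} are valid along the trajectory; appending them to $\dot V + 2\alpha V$ with the multiplier matrices~\eqref{J_y}--\eqref{J_g}, collecting the $\psi_{y},\psi_{f},\psi_{g}$ contributions, and performing a Schur-complement reduction using the definitions of $J_{21}$ and $J_{22}$ produces precisely the left-hand side of~\eqref{lmi} acting on $[\tilde x^{\mathsf T},\,(\psi_{y}+\psi_{f}+\psi_{g})^{\mathsf T}]^{\mathsf T}$. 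Negative definiteness of that matrix, with a fixed $\alpha>0$, then yields $\dot V \le -2\alpha V + 2\tilde x^{\mathsf T}P\,F_{\theta}(x,\tilde\theta)$.

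It remains to dominate the perturbation. Because $x$ is confined to the compact set $\mathcal C$ and the projection in~\eqref{eq:parameterUpdate} keeps $\hat\theta(t)$ inside $\Theta$ (so $\|\tilde\theta\| \le 2\overline\theta$ by Assumption~\ref{ass:ThetaSet}), there is a constant $\overline\Delta$, independent of $t$, with $\|F_{\theta}(x,\tilde\theta)\| \le \overline\Delta$. Young's inequality then gives $2\tilde x^{\mathsf T}P\,F_{\theta}(x,\tilde\theta) \le \alpha\lambda_{\min}(P)\|\tilde x\|^{2} + \overline\Delta^{2}\|P\|^{2}/(\alpha\lambda_{\min}(P)) \le \alpha V + c$ with $c \coloneqq \overline\Delta^{2}\|P\|^{2}/(\alpha\lambda_{\min}(P))$, hence $\dot V \le -\alpha V + c$. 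The comparison lemma yields $V(t)\le e^{-\alpha(t-t_{0})}V(t_{0}) + c/\alpha$, equivalently $\|\tilde x(t)\| \le \sqrt{\lambda_{\max}(P)/\lambda_{\min}(P)}\,e^{-\alpha(t-t_{0})/2}\|\tilde x(t_{0})\| + \sqrt{c/(\alpha\lambda_{\min}(P))}$; the residual term is independent of $t_{0}$, giving the uniformity.

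The qualifier \emph{local} is the delicate part, and I expect the invariance/sublevel-set argument to be the main obstacle, together with the bookkeeping in the S-procedure reduction above. The estimates~\eqref{eq:y_ineq}--\eqref{quad_g}, and hence the decrease of $V$, are valid only while $x,\hat x\in\mathcal C$; I would therefore fix a sublevel set $\Omega_{\rho}\coloneqq\{\tilde x : V(\tilde x)\le\rho\}$ with $\rho>c/\alpha$ chosen so that $\hat x = x - \tilde x\in\mathcal C$ whenever $x\in\mathcal C$ and $\tilde x\in\Omega_{\rho}$. Since $\dot V<0$ on $\{V=\rho\}$, the set $\Omega_{\rho}$ is forward invariant, every trajectory initialized in it remains there and converges to the residual set $\{V\le c/\alpha\}$, and the decrease estimate is valid throughout; this establishes local uniform ultimate boundedness of~\eqref{aug_error}.
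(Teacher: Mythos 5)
Your proposal follows essentially the same route as the paper's proof: the quadratic Lyapunov function $W(\tilde x)=\tilde x^{\mathsf T}P\tilde x$, absorption of $\psi_{y},\psi_{f},\psi_{g}$ via the sector conditions \eqref{matrixCondB}--\eqref{matrixCondK} and the S-procedure under \eqref{lmi}, treatment of $F_{\theta}(x,\tilde\theta)$ as a bounded perturbation (via projection and Assumption~\ref{ass:ThetaSet}), and a comparison/ultimate-boundedness conclusion equivalent to the paper's appeal to \cite[Theorem~4.18]{SCC.Khalil2002}. If anything you are more explicit than the paper about the sublevel-set invariance argument behind the \emph{local} qualifier and about deriving the residual bound, while sharing the paper's own imprecision in collapsing the three separate sector inequalities into a single quadratic form in $\tilde x$ and the summed residual $\psi_{y}+\psi_{f}+\psi_{g}$.
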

\begin{proof}
Let $\mathcal{D}$ be an open subset of the set $ \{\tilde{x} \in\mathbb{R}^n: x,\hat{x} \in \mathcal{C}\}$ and consider the continuously differentiable candidate Lyapunov function, $W: \mathcal{D} \to  \mathbb{R}$ defined as
\begin{align}
     \label{eq:lyapunov function}
    W\left(\tilde{x}\right) = {\tilde{x}}^{\mathsf{T}}P\tilde{x},
\end{align}
which satisfies the inequality
\begin{equation}\label{eq:VeBound}
    \lambda_{\min}(P)\|\tilde{x}\|^2 \leq W\left(\tilde{x}\right) \leq \lambda_{\max}(P)\|\tilde{x}\|^2. 
\end{equation}
Since $P$ is a constant symmetric positive definite matrix, both eigenvalues are positive. 
On the set, $\mathcal{D}$, the orbital derivative of the Lyapunov function along the trajectories of (\ref{aug_error}) can be expressed as
\begin{multline}
\dot{ W}(\tilde{x}) \coloneqq  
 \begin{bmatrix}
\tilde{x} \\\psi_{y}
\end{bmatrix}^{\mathsf{T}}\begin{bmatrix}
\left(\begin{gathered}
\left(K_{y_{1}}-\frac{LC}{3}\right)^{\mathsf{T}}P\\
+P\left(K_{y_{1}}-\frac{LC}{3}\right)
\end{gathered}\right)
 &P\\P& 0
\end{bmatrix}  \begin{bmatrix}
e\\\psi_{y}
\end{bmatrix}
+
\begin{bmatrix}
\tilde{x} \\\psi_{f}
\end{bmatrix}^{\mathsf{T}}\begin{bmatrix}
\left(\begin{gathered}
\left(K_{f_{1}}-\frac{LC}{3}\right)^{\mathsf{T}}P\\
+P\left(K_{f_{1}}-\frac{LC}{3}\right)
\end{gathered}\right)
 &P\\P& 0
\end{bmatrix}  \begin{bmatrix}
e\\\psi_{f}
\end{bmatrix} \\ +
 \begin{bmatrix}
\tilde{x} \\ \psi_{g}
\end{bmatrix}^{\mathsf{T}}\begin{bmatrix}
\left(\begin{gathered}\left(K_{g_1}-\frac{LC}{3}\right)^{\mathsf{T}}P\\
+P\left(K_{g_{1}}-\frac{LC}{3}\right) \end{gathered}\right) & P\\ P& 0
\end{bmatrix}\begin{bmatrix}
\tilde{x} \\ \psi_{g}
\end{bmatrix}
+  F_{\theta}(x, \tilde{\theta}). 
\end{multline}
 Provided the matrix inequalities
\begin{equation}\label{non_lmi_y}
\begin{bmatrix}
\left(K_{y_{1}}-\frac{1}{3}\left(LC\right)\right)^{\mathsf{T}}P+P\left(K_{y_{1}}-\frac{1}{3}\left(LC\right)\right) &P\\P& 0
\end{bmatrix} -
 \begin{bmatrix}
\mathbb{I}_{n}-l_{1}C & 0\\ 0 &\mathbb{I}_{n}
\end{bmatrix}^{\mathsf{T}}J_{y}\begin{bmatrix}
\mathbb{I}_{n}-l_{1}C & 0\\ 0 &\mathbb{I}_{n}
\end{bmatrix} < 0,
\end{equation}
\begin{equation}\label{non_lmi_f}
\begin{bmatrix}
\left(K_{f_{1}}-\frac{1}{3}\left(LC\right)\right)^{\mathsf{T}}P+P\left(K_{f_{1}}-\frac{1}{3}\left(LC\right)\right) &P\\P& 0
\end{bmatrix} -
 \begin{bmatrix}
\mathbb{I}_{n}-l_{2}C & 0\\ 0 &\mathbb{I}_{n}
\end{bmatrix}^{\mathsf{T}}J_{f}\begin{bmatrix}
\mathbb{I}_{n}-l_{2}C & 0\\ 0 &\mathbb{I}_{n}
\end{bmatrix} < 0,
\end{equation}
and
\begin{equation}
\begin{bmatrix}\label{non_lmi_g}
\left(K_{g_1}-\frac{1}{3}\left(LC\right)\right)^{\mathsf{T}}P+P\left(K_{g_1}-\frac{1}{3}\left(LC\right)\right) & P\\P& 0
\end{bmatrix} -
 \begin{bmatrix}
\mathbb{I}_{n}-l_{3}C & 0\\ 0 &\mathbb{I}_{n}
\end{bmatrix}^{\mathsf{T}}J_{g}\begin{bmatrix}
\mathbb{I}_{n}-l_{3}C & 0\\ 0 &\mathbb{I}_{n}
\end{bmatrix} < 0
\end{equation}
are satisfied for some constant $\alpha > 0$, the multiplier matrices and sector conditions formulated in (\ref{matrixCondB}), (\ref{matrixCondH}) and (\ref{matrixCondK}), the S-Procedure Lemma \cite{SCC.Boyd1994}, Assumption~\ref{ass:jacobianbounds}, and Assumption~\ref{ass:ThetaSet} can be used to guarantee that the orbital derivative is bounded as (cf. \cite{SCC.Behcet.Martin.ea2008})
\begin{equation}\label{eq:VeIneq}
    \dot{W}\left(\tilde{x}\right) \leq -\alpha W\left(\tilde{x}\right), \forall \tilde{x} \in \mathcal{D}, \|\tilde{x}\| \geq \xi > 0.
\end{equation}
where $\xi = \frac{\lambda_{\max}(P)\overline{F}} {\alpha\lambda_{\min}(P)}$ and $\max\limits_{x \in \mathcal{C}}\| F_{\theta}(x, \tilde{\theta})\|\leq \overline{F} \left\Vert\tilde{\theta}\right\Vert$ for some $\overline{F} \geq 0$. 

Invoking \cite[Thereom~4.18]{SCC.Khalil2002}, the state estimation error is locally uniformly ultimately bounded. And the ultimate bound on $\tilde{x}$ can be estimated as
\begin{equation}\label{eq:errorbound}
    \limsup_{t\to\infty} \|\tilde{x}\| \coloneqq \sqrt{\frac{\lambda_{\max}(P)}{\lambda_{\min}(P)}}\xi .
\end{equation}
\begin{remark}
    The observer design is only valid if the control input remains bounded and the system trajectories remain within the compact set $\mathcal{C}$ where the bounds on the Jacobians in (\ref{aug_jac_y}), (\ref{aug_jac_f}) and (\ref{aug_jac_g}), respectively, are valid. 
\end{remark}

\begin{remark}
    The matrix inequality in (\ref{lmi}) can be reformulated as a linear matrix inequality (LMI) using the typical variable substitution method. Indeed, substituting $L = P^{-1}R$ in \eqref{lmi}, the matrix $P$ and the observer gains $L$,  $l_{1}$, $l_{2}$ and $l_{3}$ can be obtained by solving the LMI
\begin{equation}\label{lmi2}
    \begin{bmatrix}
    A^\mathsf{T}P+PA-C^\mathsf{T}R^\mathsf{T}-RC + 2\alpha P & P-{J_{21}}^\mathsf{T} \\P-J_{21}  & -J_{22}
    \end{bmatrix} < 0
    \end{equation}
    for $P$, $R$, $l_{1}$, $l_{2}$ and $l_{3}$.
\end{remark}
\end{proof}
In order to rigorously analyze the convergence properties of the parameter estimation error, a precise definition of ``finitely informative'' and ``persistently informative'' data in the history stack is presented below.
\begin{definition}\label{defn:finitelInformative}
\cite{SCC.Self.Abudia.ea2022} The signal $(\hat{x},u)$ is called finitely informative (FI) if there exist time instances $0 \leq t_{1} < t_2 < \hdots < t_{N}$, for some finite positive integer $N$, such that the resulting history stack is full rank and persistently informative (PI) if, for any $T \geq 0$, there exist time instances $T \leq t_{1} < t_2 < \hdots < t_{N}$ such that the resulting history stack is full rank.
\end{definition}
The following theorem establishes that the parameter estimation error $\tilde{\theta}$ converges to a neighborhood of the origin if Assumption~\ref{ass:regressorRank} holds and the data are sufficiently informative, as per Definition~\ref{defn:finitelInformative}. To facilitate the analysis, given $s$ in $\mathbb{N}$, let $\mathcal{H}_{s}$ denote the history stack that is active during the time interval $I_s:=\{t \mid \rho(t) = s\}$ containing the data $\left\{(\hat{\mathscr{X}}_{si},\hat{\mathscr{Y}}_{si}, \hat{\mathscr{G}}_{fu_{si}})\right\}_{i=1,\hdots, N}$.
\begin{theorem}\label{thm:parameterconvergence}
    If the state and parameters of the system in (\ref{eq:dynamics_x}) are estimated using state and parameter estimators that satisfy the conditions of Theorem~\ref{thm:stateErrorConvergence} and Assumption~\ref{ass:regressorRank}, if the signal $(\hat{x},u)$ is FI,  if $\mathcal{H}$ is populated using Algorithm~\ref{algo:parameterEstimatorAlgo}, and if the excitation lasts long enough for two purging events (i.e. $\mathcal{H}_3$ is full rank), then the trajectories of the parameter estimation error are ultimately bounded.
\end{theorem}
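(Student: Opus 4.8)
The plan is to treat each fixed-history-stack epoch $I_s$ as a stable, perturbed linear time-varying system in $\tilde\theta$ and to chain the resulting bounds across the (at most three) epochs created before $\mathcal H_3$ becomes active and permanent. First I would construct a composite Lyapunov function. For the state error, Theorem~\ref{thm:stateErrorConvergence} already supplies $W(\tilde x)={\tilde x}^{\mathsf T}P\tilde x$ with $\dot W\le-\alpha W$ outside the ball of radius $\xi=\tfrac{\lambda_{\max}(P)\overline F}{\alpha\lambda_{\min}(P)}\|\tilde\theta\|$; for the parameter error I would take $V_\theta(\tilde\theta,\Gamma)=\tfrac12\tilde\theta^{\mathsf T}\Gamma^{-1}\tilde\theta$ on epochs where $\dot\Gamma$ is the least-squares law and $V_\theta=\tfrac12\tilde\theta^{\mathsf T}\tilde\theta$ on the projection-frozen epochs, and combine as $V=W+\gamma V_\theta$ for a small $\gamma>0$. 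The projection operator in \eqref{eq:parameterUpdate} only makes $\dot V_\theta$ more negative (it removes the component of the update along $\hat\theta$ when that would push $\hat\theta$ out of $\Theta$), so it can be discarded in the inequality after the standard argument that $\|\hat\theta\|\le\overline\theta$ is invariant.

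Next I would differentiate $V_\theta$ along \eqref{eq:parameterErrorDyn}. On an epoch where $\mathcal H_s$ is full rank, the first sum is at least $\underline c\,\mathbb I_p$ (Assumption~\ref{ass:regressorRank}), giving a term $\le-k_\theta\underline c\,\tilde\theta^{\mathsf T}\tilde\theta$ after bounding $\Gamma$ from below — for which I would invoke the usual persistence-of-excitation bound on the least-squares gain, $\underline\gamma\,\mathbb I\preceq\Gamma\preceq\overline\gamma\,\mathbb I$, which follows from $\beta_1>0$ and full rank of the stack active on that epoch. The residual sum is $O\big(\sup_\sigma\|\tilde x(\sigma)\|\big)$ by Lemma~\ref{lem:ErrorTermformulation}, so $\dot V_\theta\le-k_\theta\underline c\,V_\theta+c_1\sqrt{V_\theta}\,\sup_\sigma\|\tilde x\|$. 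Combining with $\dot W\le-\alpha W+$ (cross term controlled by $\overline F\|\tilde\theta\|$) and choosing $\gamma$ small enough that the $\overline F$ cross term is dominated, I would obtain $\dot V\le-c_2 V+c_3$ on that epoch, hence an exponentially decaying ultimate bound of the form $\|\tilde\theta\|^2\lesssim \overline F^2/(\underline c\,k_\theta)\cdot(\text{const})$ plus a transient. The subtlety is the \emph{frozen} epochs (those with $\dot\Gamma=0$ and the projected update): there I only get $\dot V_\theta\le 0$ up to the residual and the frozen-$\Gamma$ term, so $V$ cannot grow faster than linearly in time on such an epoch; the dwell-time $\mathcal T$ in Algorithm~\ref{algo:parameterEstimatorAlgo} caps the duration of any epoch before a purge, so the growth over a single epoch is bounded by a constant.

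The chaining step is then: before $\mathcal H_3$ is active there are at most two purging events, hence at most three epochs $I_1,I_2,I_3$; on each epoch $V$ either contracts toward a fixed ultimate bound or grows by at most a fixed constant (bounded by the dwell time), so $V(t)$ at the start of $I_3$ is bounded by a computable constant depending only on the data. Once $\mathcal H_3$ is full rank it is never purged again (the dwell-time logic guarantees this is the last purge under the hypothesis that excitation lasts long enough for exactly the first two), so on $[t_3,\infty)$ the contraction inequality $\dot V\le-c_2V+c_3$ holds globally in time and Khalil's comparison lemma \cite[Theorem~4.18]{SCC.Khalil2002} yields that $\tilde\theta$ is ultimately bounded, with $\limsup_{t\to\infty}\|\tilde\theta\|$ proportional to the state-error ultimate bound from \eqref{eq:errorbound} scaled by $1/\sqrt{\underline c\,k_\theta}$.

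The main obstacle I expect is the interaction between the two purging events and the cascade structure: I must verify that copying $\mathcal G$ into $\mathcal H$ at a purge instant does not cause a jump in $V_\theta$ that the Lyapunov argument cannot absorb — i.e. that $\Gamma^{-1}$ (which is reset implicitly through the change of active regressor, though $\Gamma$ itself is continuous) and the new $\hat{\mathcal Y}(t_i)$ keep $V_\theta$ bounded across the switch. Handling this requires the $\xi<1$ purging threshold and the condition $\min\eig(\hat{\mathscr Y}^{\mathsf T}\hat{\mathscr Y})\ge\xi\lambda$ in Algorithm~\ref{algo:parameterEstimatorAlgo}, which guarantees the post-purge stack is no worse than a fixed fraction of the best stack seen so far, so $\underline c$ degrades by at most a bounded factor at each of the two switches. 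The rest is the routine ISS-style bookkeeping sketched above.
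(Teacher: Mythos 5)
Your core skeleton --- ISS of the $\tilde\theta$-subsystem with respect to the residual, a cascade with the state estimator, and chaining across the at most three epochs so that the stack active on $I_3$ was recorded when $\tilde x$ had already settled --- is the same as the paper's. The main structural difference is that you build a composite Lyapunov function $V = W + \gamma V_\theta$ and argue small-gain style, whereas the paper never couples the two: the projection operator makes $\|\tilde\theta\|\leq 2\overline\theta$ hold a priori, so $F_\theta(x,\tilde\theta)$ is a bounded disturbance in Theorem~\ref{thm:stateErrorConvergence} independently of parameter convergence, and the parameter loop is then analyzed on its own with $V(\tilde\theta,t)=\tfrac12\tilde\theta^{\mathsf T}\Gamma^{-1}\tilde\theta$, the bound $\underline{\Gamma}\mathbb{I}_p\leq\Gamma\leq\overline{\Gamma}\mathbb{I}_p$, completion of squares, and \cite[Theorem~4.19]{SCC.Khalil2002}. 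The cascade is cleaner and spares you the $\gamma$-selection and the cross-term bookkeeping; your route can be made to work but buys nothing here.

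Two concrete points in your argument need repair. First, you write the perturbation as $c_1\sqrt{V_\theta}\,\sup_\sigma\|\tilde x(\sigma)\|$ with a running supremum, which would drag you into delay-system or Lyapunov--Krasovskii territory. In fact, once a data point enters $\mathcal H_s$ the residual $\mathcal{E}_{si}$ is a \emph{fixed recorded number}; $Q_s$ is piecewise constant in $t$, bounded by $\overline{Q}_s \propto N L_e \sup_{t\in I_{s-1}}\|\tilde x(t)\|$, and this is exactly what makes the epoch-wise ISS estimate elementary. Second, your claim that the dwell time $\mathcal T$ \emph{caps} the duration of an epoch is backwards: $\mathcal T$ is a minimum inter-purge spacing, and an epoch can last arbitrarily long. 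The reason $V_\theta$ cannot grow without bound on a non-contracting stretch is again the projection (and, for the eventual constancy of the stack, the finite-informativity hypothesis together with the eigenvalue-improvement threshold $\lambda^{*}$, not the dwell-time logic). Neither error sinks the conclusion, because projection and the piecewise-constant residual rescue the estimates, but as written those two steps do not follow from the mechanisms you cite.
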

\begin{proof}
To facilitate analysis, let $\mathit{\Psi}_{s}: \mathbb{R}_{\geq 0} \to \mathbb{R}^{p\times p}$ and $\mathit{Q}_{s}: \mathbb{R}_{\geq 0} \to \mathbb{R}^{n\times p}$ be defined as  $\mathit{\Psi}_{s} \coloneqq \sum_{i=1}^{N}\frac{{\hat{\mathcal{Y}}_{si}^{\mathsf{T}}\hat{\mathcal{Y}}_{si}}}{1+\kappa\|\hat{\mathcal{Y}}_{si}\|^{2}}$ and $\mathit{Q}_{s} \coloneqq \sum_{i=1}^{N}\frac{{\hat{\mathcal{Y}}_{si}^\mathsf{T}}\mathcal{E}_{si}}{1+\kappa\|\hat{\mathcal{Y}}_{si}\|^{2}}$. Using this notation, the dynamics of the parameter estimation error in (\ref{eq:parameterErrorDyn}) can be expressed as
\begin{equation}\label{modifiedObserverDyn}
    \dot{\tilde{\theta}} = -k_{\theta}\Gamma\mathit{\Psi}_{s}\tilde{\theta}-k_{\theta}\Gamma\mathit{Q}_{s},
\end{equation}
and (\ref{eq:gammaUpdateLaw}) can be expressed as 
\begin{equation}\label{eq:gammaUpdateLaw2}
    \dot{\Gamma} = \beta_{1}\Gamma - k_{\theta}\Gamma\mathit{\Psi}_{s}\Gamma.
\end{equation}
It is important to note that the functions $\mathit{\Psi}_{s}$ and $\mathit{Q}_{s}$ are piece-wise continuous. Thus, the trajectories of (\ref{modifiedObserverDyn}) are defined in the sense of Carathéodory \cite{SCC.Kamalapurkar.ea2017}. 

Let $\rho: \mathbb{R}_{\geq 0} \to \mathbb{N}$ denote a switching signal that satisfies initial condition $\rho(0) = 1$ and for any time $t$ in the domain of the signal, $\rho(t) = j + 1$, where $j$ denotes the number of times the update $\mathcal{H} \gets \mathcal{G}$ has been carried out over the time interval $0$ to $t$.

Using arguments similar to \cite[Theorem 1]{SCC.Kamalapurkar2017a}, provided the conditions of Theorem~\ref{thm:stateErrorConvergence} are satisfied, and the states and state estimation errors remain within the compact sets $\mathcal{C}$ and $\mathcal{D}$, respectively over the time interval $I_{s-1}$ in which the history stack was recorded, then using the error bound developed in Lemma~\ref{lem:ErrorTermformulation} the error terms can be bounded as
\begin{equation}\label{eq:epsilonBound}
    \|\mathcal{E}_{si}\| \leq  L_{e}\overline{e}_{s}, \forall i \in \{1,\hdots,N\}, \forall \tilde{x} \in \mathcal{D},
\end{equation}
where $\overline{e}_{s} \coloneqq \sup_{t \in I_{s-1}}\|\tilde{x}(t)\|$ and $L_{e} > 0$ is a constant.

Consider the candidate Lyapunov function $V: \Theta \times \mathbb{R}_{\geq 0} \to \mathbb{R}$ defined as, 
     \begin{equation}
         V(\tilde{\theta},t) \coloneqq \frac{1}{2}\tilde{\theta}^\mathsf{T}{\Gamma}^{-1}(t)\tilde{\theta}.
     \end{equation}
Using arguments similar to those presented in \cite[Section~4.4.2]{SCC.Ioannou.Sun1996}, provided (\ref{ass:regressorRank}) holds and $\lambda_{\min}\{{\Gamma(0)^{-1}}\} > 0$, the update law in (\ref{eq:gammaUpdateLaw}) ensures that the least squares update law satisfies
\begin{align}\label{eq:OFBADP1Gammabound}
\underline{\Gamma}\mathbb{I}_{p}\leq\Gamma\left(t\right)\leq\overline{\Gamma}\mathbb{I}_{p},\forall t\in\mathbb{R}_{\geq 0}	\end{align}
for some  $\overline{\Gamma},\underline{\Gamma}>0$ , where $\mathbb{I}_{p}$ denotes a $p \times p$ identity matrix.
Applying the bound in (\ref{eq:OFBADP1Gammabound}), the candidate Lyapunov function satisfies the following inequality
\begin{equation}
\frac{1}{2\overline{\Gamma}}\|\tilde{\theta}\|^{2}\leq V\left(\tilde{\theta},t\right)\leq\frac{1}{2\underline{\Gamma}}\|\tilde{\theta}\|^{2}, \forall t \in \mathbb{R}_{\geq 0}\label{eq:OFBADPVBound}.
\end{equation}

Using arguments similar to those presented in \cite[Theorem~4.4.1]{SCC.Ioannou.Sun1996}, the orbital derivative of $V$ can be bounded as,
\begin{equation}
    \dot{V}_{s}\left(\tilde{\theta}, t\right) \leq -\frac{1}{2}\underline{a}\|\tilde{\theta}\|^{2}+ k_{\theta}\|\tilde{\theta}\|\overline{Q}_{s}, 
\end{equation}
where $\underline{a} \coloneqq k_{\theta}\underline{c} + \frac{\beta_{1}}{\overline{\Gamma}}$, $\underline{c}$ is defined in  Assumption~\ref{ass:regressorRank} and $\overline{Q}_{s}$ is a positive constant such that $\overline{Q}_{s} \geq \|Q_{s}\|$. Using completion of squares, the orbital derivative is then bounded for all $t\in \mathbb{R}_{\geq 0}$ as
\begin{equation}
     \dot{V}_{s}\left(\tilde{\theta}, t\right) \leq -\frac{1}{4}\underline{a}\|\tilde{\theta}\|^2, \forall \|\tilde{\theta}\| \geq \rho(\|\mu\|)
 \end{equation}
 where $\rho(\|\mu\|) \coloneqq \sqrt{\frac{\overline{\Gamma}}{\underline{\Gamma}}}\left(\frac{4k_{\theta}}{\underline{a}}\right)\|\mu\|^{2}$ and $\mu \coloneqq \sqrt{\overline{Q}_{s}}$.
Hence, the conditions of \cite[Theorem~4.19]{SCC.Khalil2002} are satisfied and it can be concluded that (\ref{modifiedObserverDyn}) is input-to-state stable with state $\tilde{\theta}$ and input $\mu$.

If Algorithm~\ref{algo:parameterEstimatorAlgo} is implemented and if the signal $(\hat{x}, u)$ is FI, then there exists a time instance $T_{s}$, such that for all $t \geq T_{s}$, the history stack remains unchanged. And as a result, using \cite[Exercise~4.58]{SCC.Khalil2002}, an ultimate bound on $\tilde{\theta}$ can be estimated as
\begin{equation}\label{eq:ultimatebound}
     \limsup_{t\to\infty} \|\tilde{\theta} (t)\| \leq \overline{\theta}(T_{s}) \coloneqq \sqrt{\frac{\overline{\Gamma}}{\underline{\Gamma}}}\left(\frac{4k_{\theta}\overline{Q}(T_{s})}{\underline{a}}\right).
\end{equation} The parameter estimation error can be reduced by reducing the estimation errors corresponding to the state estimates stored in the history stack, which reduces $Q_{s}$.

The projection algorithm and Theorem \ref{thm:stateErrorConvergence} imply boundedness of all signals in the closed loop for all $t$. Furthermore, Theorem \ref{thm:stateErrorConvergence} implies that given any $ \varepsilon > 0, $ the gain $ \alpha $ can be selected large enough to ensure that $\tilde{x}$ has reached the ultimate bound before $ t = T_1 $, and that the ultimate bound is smaller than $\varepsilon$ so that $ \overline{e}_{2} \leq \varepsilon $. Since the history stack $\mathcal{H}_{3}$, which is active over the interval $I_{3}$, is recorded during the interval $I_{2}$, the bounds in (\ref{eq:epsilonBound}) can be used to show $\overline{Q}_{3} = \frac{N L_{e}\overline{e}_{2}}{2\sqrt{\kappa}} \leq \frac{N L_{e}\varepsilon}{2\sqrt{\kappa}}$. As such, if $(\hat{x},u)$ is FI with the excitation lasting long enough so that $\mathcal{H}_{3}$ is full rank, then \eqref{eq:ultimatebound} implies that $ \limsup_{t\to\infty} \|\tilde{\theta} (t)\| \leq \sqrt{\frac{\overline{\Gamma}}{\kappa\underline{\Gamma}}}\left(\frac{2k_{\theta}N L_{e}}{\underline{a}}\right)\varepsilon.$ 

\end{proof}

\section{Simulation}\label{section:simulation}

To demonstrate the performance of the developed method, a two-state dynamical system is simulated. 
\subsection{Two State Dynamical System}
Consider the dynamical system of the form 
\begin{equation}
    \dot{x} = Y(x)\theta + g(x)u, \quad y = Cx, 
\end{equation}
with states $x$ = $[(x)_1;(x)_2]$ where 
\begin{equation}
    Y(x) = \begin{bmatrix}
        (x)_{2} & 0 & 0 & 0 \\
        0 & (x)_{1} & (x)_{2} & x_{2}(\cos(2(x)_{1})+2)^{2}
    \end{bmatrix},
\end{equation}
$\theta = [(\theta)_{1};(\theta)_{2};(\theta)_{3};(\theta)_{4}]$, $g(x)=[0;\cos(2(x)_{1})+2]$ and $C = [1;0]^\mathsf{T}$.

To satisfy Assumption~\ref{ass:regressorRank}, a controller that results in a uniformly bounded system response is needed.  For the purpose of this simulation study, the controller, denoted as $u$, is chosen to be a proportional-derivative (PD) controller, represented by the equation $u = -k_{p}\left((x)_{1}-x_{d}\right)-k_{d}\left((x)_{2}-\dot{x}_{d}\right)$, where $k_{p}$ and $k_{d}$ are constants that control the proportional and derivative terms, respectively. The objective of this controller is to make the system track the trajectory $(x_d)_{1}(t)=(x_d)_{2}(t) = -\frac{1}{3}\cos(3t)-\frac{1}{2}\cos(2t)$. The initial conditions of the systems are selected as $x(0) = [2;2]$, $\hat{x}(0) = [2.5;1.5]$, $\hat{\theta}(0) = [0;0;0;0]$. The actual values of the unknown parameters in the system model are $(\theta)_{1} = 1,\ (\theta)_{2} = -1,\ {\theta}_3 = -0.5,\ {\theta}_4 = 0.5$. 

In order to satisfy the stability conditions of Theorem~\ref{thm:stateErrorConvergence}, the LMI in (\ref{lmi}) is solved using SEDUMI in YALMIP on MATLAB. The objective is to obtain the three observer gains, $L$, $l_{1}$, and $l_{2}$, as well as the symmetric positive definite matrix, P, which satisfies the LMI. The learning rate used in the LMI is $\alpha = 2$.

Data is added to the history stack $\mathcal{H}$ using the minimum eigenvalue maximization algorithm detailed in Algorithm~ \ref{algo:parameterEstimatorAlgo} with initial values given as $I_{Y,0} = 0_{2\times 4}$ and $I_{fgu,0} = 0_{2\times 1} $, $T = 2$, $t^{*} = 0.1$, $\lambda^{*} = 0$. The learning gains are selected, through
trial and error, as $N=25$, $k_{\theta} = 50$, $\beta_{1} = 0.5$, $\Gamma(0) = \diag([1,1,1,1])$, $k_{p}=[50;50]^\mathsf{T}$.
\begin{figure}[H]
        \centering
        \begin{tikzpicture}
            \begin{axis}[
                xlabel={$t$ [s]},
                ylabel={$\tilde{x}(t)$},
                legend pos = north east,
                legend style={nodes={scale=0.75, transform shape}},
                enlarge y limits=0.05,
                enlarge x limits=0,
                width=\linewidth,
                height=0.5\linewidth,
            ]
            \pgfplotsinvokeforeach{1,...,2}{
                \addplot+ [thick, mark=none] table [x index=0, y index=#1] {data/downsampledxTilde.dat};
            }
            \legend{$\tilde{x}_1$, $\tilde{x}_2$}
            \end{axis}
        \end{tikzpicture}
            \caption{Trajectory of error between the actual states and the estimated states}
		\label{fig:state_error_sim1}
\end{figure}
\begin{figure}
        \centering
        \begin{tikzpicture}
            \begin{axis}[
                xlabel={$t$ [s]},
                ylabel={$\tilde{\theta}(t)$},
                legend pos = south east,
                legend style={nodes={scale=0.75, transform shape}},
                enlarge y limits=0.05,
                enlarge x limits=0,
                width=\linewidth,
                height=0.5\linewidth,
            ]
            \pgfplotsinvokeforeach{1,...,4}{
                \addplot+ [thick, mark=none] table [x index=0, y index=#1] {data/downsampledThetaTilde.dat};
            }
            \legend{$\tilde{\theta}_1$, $\tilde{\theta}_2$, $\tilde{\theta}_3$, $\tilde{\theta}_4$}
            \end{axis}
        \end{tikzpicture}
		\caption{Trajectory of error between the actual parameters and the estimated parameters}
		\label{fig:parameter_error_sim1}
\end{figure}

\subsection{Results and Discussion} \label{ Result_sim1}
Figure~\ref{fig:state_error_sim1} and figure~\ref{fig:parameter_error_sim1} demonstrate that the developed state and parameter estimators are effective in driving the trajectories of state estimation errors and parameter estimation errors to the origin, respectively. This result demonstrates the effectiveness of the developed method and validates the theoretical results in Section~\ref{section:stabilityAnalysis}. The value of the constant symmetric positive definite matrix $P$ is given as $P =[2.3886,-0.1840;-0.1840,0.0270]$, and the value of the observer gain was found to be $L = [10.0671;
  103.167]$.  
  
  Algorithm~\ref{algo:parameterEstimatorAlgo} was implemented with MATLAB's DDE solver. The event-based implementation was employed to avoid unexpected errors that may arise from numerically integrating discontinuous differential equations with variable step-size solvers due to the removal and addition of data to the history stack. Given the delay $T$, the solver kept track of the solution of the DDEs at time $t$ and time $t-T$. Then an event function was passed as an optional argument to the DDE solver and set up to stop integration as soon as an event is detected as described by Algorithm~\ref{algo:parameterEstimatorAlgo}.

\section{Conclusion}\label{section:conclusion}

In this paper, an online joint state and parameter estimation scheme is developed for nonlinear systems is proposed using a multiplier matrix observer design and a novel event-based implementation of concurrent learning adaptive update laws. Convergence properties of the developed method are analyzed using Lyapunov methods and validated through simulation, demonstrating local uniformly ultimately boundedness of the state estimation errors and input-to-state stability of parameter estimation errors under a finite informativity condition. Additionally, a persistent informativity condition guarantees convergence of the parameter estimation errors to a neighbourhood of the origin.

To improve the applicability of the observer design to a wider range of nonlinear systems and to allow for relaxed LMI conditions, future work will involve developing a methodology for simultaneous state and parameter estimation via exact Takagi-Sugeno tensor-product models or polynomial rewriting of the error system, as formulated in \cite{SCC.Quintana.Bernal.ea2021,SCC.Guerra.Bernal.ea2018}. Additionally, the current LMI architecture can be augmented with techniques such as \cite{SCC.Bengt.Richard.ea2012} that uses a delta operator formulation to address the rank deficiency in poorly conditioned LMIs.

\small
 \bibliographystyle{IEEETrans.bst}
\bibliography{scc, sccmaster,scctemp}
 
\end{document}